\mathchardef\mhyphen="2D 
\newtheorem{theorem}{Theorem}
\newtheorem{corollary}{Corollary}
\newtheorem{lemma}[theorem]{Lemma}
\newtheorem{def.}{Theorem}
\newtheorem{condition}{Condition}
\newenvironment{taggedcondition}[1]
 {\taggedconditionx}
 {\endtaggedconditionx}
\newtheorem{definition}[def.]{Definition}
\renewcommand{\Pr}{\mathbb{P}}
\newcommand{\E}{\mathbb{E}}
\newcommand{\LE}[2]{\textup{LE}(#1\rightarrow #2)}
\newcommand{\elfs}[2]{\textup{elfs}(#1,#2)}
\newcommand{\T}{\mathcal{T}}
\renewcommand{\Im}{\textup{Im}}
\title{Advances in Quantum Algorithms for the Shortest Path Problem}
\date{}
\author{Adam Wesołowski\footnote{\texttt{adam.wesolowski.2023@live.rhul.ac.uk}} }
\author{Stephen Piddock\footnote{\texttt{stephen.piddock@rhul.ac.uk}}}
\affil{Royal Holloway University of London, Department of Computer Science, UK}
\begin{document}

\maketitle

\begin{abstract}
  Given an undirected, weighted graph, with $n$ vertices and $m$ edges, and two special vertices $s$ and $t$, the problem is to find the shortest path between them. We give two bounded-error quantum algorithms with improved runtime in the adjacency list model that solve the problem on special classes of graphs defined via pathfinding probabilities of classical random walks and the electrical network framework. 
  Firstly, we give a simple quantum algorithm based on sampling edges from a graph via the quantum flow state and running a classical algorithm on the sampled edges. It runs in $\tilde{O}(l^2\sqrt{m})$ expected time and uses $O(\log{n})$ space on graphs where the shortest $s$-$t$ path is also a minimum resistance $s$-$t$ subgraph. 
  
  Our main algorithm can be thought of as a divide and conquer version of this approach and works on a special class of graphs where classical loop-erased random walk has a probability $q>0.537$ of finding the shortest $s$-$t$ path. In such cases the quantum algorithm outputs the shortest $s$-$t$ path with high probability in $\widetilde{O}(\ell\sqrt{m})$ expected time and $O(\log{n})$ space, where $l$ is the length (or total weight, in case of weighted graphs) of the shortest $s$-$t$ path.
This algorithm can be parallelised to  $\tilde{O}(\sqrt{lm})$ circuit depth when using $O(l\log{n})$ space. With the latter we partially resolve with an affirmative answer the open problem of whether a path between two vertices can be found in the number of steps required to detect it. 
\end{abstract}
\newpage

\section{Introduction}

Quantum algorithms have led to substantial improvements in a wide range of computational problems, including optimization, graph theory, and algebraic computation. For instance, polynomial speedups have been obtained for the Travelling Salesman Problem~\cite{moylett2017quantum}, min-cut~\cite{apersmincut}, connectivity~\cite{durr2006quantum}, graph coloring~\cite{gaspers2023quantum}, matching~\cite{ambainis2006quantum}, and path detection~\cite{belovs2012span, belovs2013quantum}. A more recent line of work has explored quantum approaches to path-finding, such as the algorithm of Jeffery et al.~\cite{path-edge}, and in this work, we contribute to this direction by studying quantum algorithms for the single-pair shortest path (SPSP) problem under specific graph-theoretic conditions.

\vspace{0.5em}
The classical shortest path problem is a foundational topic in algorithms and graph theory. It appears in several formulations, such as single-source shortest paths (SSSP), all-pairs shortest paths (APSP), and single-pair shortest path (SPSP). While Dijkstra’s algorithm~\cite{dijkstra}, and later improvements using Fibonacci heaps~\cite{fibbonaciheap} and Thorup’s linear-time algorithm for integer-weighted graphs~\cite{single-source}, have led to nearly optimal classical solutions, the quantum complexity of shortest path problems remains less well understood.

In this work we focus on the single-pair-shortest-path (SPSP) problem, formulated as follows:

\begin{definition}[Undirected Single-Pair Shortest Path (SPSP)]
Given an undirected graph $G = (V, E, w)$ with non-negative edge weights $w : E \rightarrow \mathbb{R}_{\geq 0}$ and a pair of vertices $s, t \in V$, the goal is to compute a path from $s$ to $t$ of minimum total weight (if one exists). That is, find a sequence of vertices $(v_0 = s, v_1, \dots, v_k = t)$ such that $\sum_{i=0}^{k-1} w(v_i, v_{i+1})$ is minimized over all such paths.
\end{definition}
While the SPSP problem is natural and frequently encountered in practice, it has not received significant dedicated attention in the classical literature. This is because, classically, SPSP requires essentially the same asymptotic runtime as SSSP, which in the worst case requires examining $\Theta(m)$ edges. As a result, SSSP algorithms are typically used to solve SPSP as a special case. However, in the quantum setting, SPSP may admit lower complexity than SSSP. This is to the best of our knowledge first work to focus on quantum algorithms for the SPSP problem.

\vspace{0.5em}
For the SSSP problem, D\"{u}rr et al.~\cite{durr2006quantum} designed a quantum algorithm that applies Grover's minimum finding~\cite{durr1996quantum} within Dijkstra's classical framework, achieving runtime $O(\sqrt{nm} \log^{3/2} n)$. Other quantum approaches have targeted restricted cases like acyclic graphs~\cite{Khadiev2019QuantumAF}, or employed quantum annealing heuristics~\cite{9186612} which do not yield asymptotic improvements.

\vspace{0.5em}
A closely related problem is $s$-$t$ connectivity: determining whether a path exists between two nodes. In the adjacency list model, D\"{u}rr et al.~\cite{durr2006quantum} provided an algorithm that not only detects connectivity in $\Theta(n)$ queries to the adjacency list (and $\Theta(n^{1.5})$ queries to the adjacency matrix), but also outputs an $s$–$t$ path. However, subsequent results~\cite{belovs2012span, belovs2013quantum} achieved faster query complexity for path detection (without providing a path). Belovs and Reichardt~\cite{belovs2012span} gave an algorithm in the adjacency matrix model that detects the presence of an $s$-$t$ path in $O(n\sqrt{L})$ queries, where $L$ is the length of the path. Belovs~\cite{belovs2013quantum} extended this to the adjacency list model, showing $O(\sqrt{RW})$ query complexity for detecting $s$-$t$ connectivity, where $R$ is the effective resistance and $W$ the total weight. In unweighted graphs, this yields an upper bound of $O(\sqrt{lm})$ for path detection, where $l$ is the length of the shortest path.

\vspace{0.5em}
    A natural question that emerged from these works is whether the quantum query complexity of \emph{finding} a path could match that of \emph{detecting} it. This question has been open for a long time with Jeffery et al.~\cite{path-edge} recently achieving an $\tilde{O}(L^{3/2}n)$ upper bound for single pair path-finding in the adjacency matrix model, where $L$ is the length of the longest path connecting $s$ and $t$. However, no known algorithm in the adjacency list model achieves query complexity matching Belovs’ detection bound.

\vspace{0.5em}
 We give a quantum algorithm for finding shortest paths in the adjacency list model with a quantum circuit depth which asymptotically matches Belovs' detection algorithm~\cite{belovs2013quantum}, \emph{if one of two specific structural conditions on the input graph holds}. Informally, the first condition (Condition~\ref{cond:1}) requires that the shortest $s$-$t$ path is also a subgraph with minimum $s$-$t$ resistance\footnote{such that any subgraph that contains the entire shortest $st$ path has lower resistance than any subgraph that does not contain the entire  path}. 
 Condition \ref{cond:2} requires that a classical \textit{loop-erased random walk} can find the shortest $s$-$t$ path with probability $p>0.537$.
These two conditions do not appear to be directly comparable, but Condition 1 appears in some ways stronger. 
 If one of these two conditions hold, then our quantum algorithm can find that path in $\widetilde{O}(\ell\sqrt{m})$ expected depth with high probability.   Crucially, we emphasize that our algorithm \textbf{does not apply to general graphs}, and care must be taken in interpreting our results within the class of graphs covered by conditions~\ref{cond:1} or~\ref{cond:2}. 
 A general, unconditional lower bound of $\Omega(n)$ may be obtained for the SPSP problem via a trivial reduction from the undirected st-connectivity problem~\cite{apers2022no}.

Under the imposed conditions, we construct two quantum algorithms for the SPSP problem. The first, $\mathcal{A}_1$, is a brute-force approach with query complexity $\tilde{O}(l^2\sqrt{m})$. The second, $\mathcal{A}_2$, uses a divide-and-conquer strategy to reduce the cost to $\tilde{O}(l\sqrt{m})$ steps and $O(\log{n})$ memory. We mention that with $O(l\log{n})$ work space we can parallelize the divide and conquer approach to obtain an algorithm with circuit depth $\tilde{O}(\sqrt{lm})$), where $l$ is the length of the path. This provides evidence that matching path detection complexity for shortest path-finding can be achieved in special class of graphs. For comparison, Jeffery et al.'s algorithm~\cite{path-edge} in the adjacency matrix model requires $\tilde{O}(L^{3/2}n)$ queries; for small $l$, our result is strictly better.

\vspace{0.5em}
The SPSP problem is fundamental in a range of applications such as navigation~\cite{madkour2017survey}, autonomous driving~\cite{jiao2019using}, network routing, and bioinformatics~\cite{1565664}. Moreover, shortest paths are often used as subroutines in more complex algorithms~\cite{Delling2009,6194335,7171911,1565664,hermansson2015generalized}. Our results imply a speedup for such subroutines under structural assumptions. Notably, they improve the performance of the shortest-path kernel method by Borgwardt and Kriegel~\cite{1565664}.

\paragraph{Results.}
The main contribution of this work is a demonstration of quantum algorithms that on structured instances find the shortest $s\mhyphen t$ path in query complexity lower than any known classical and quantum algorithms.
Algorithms $\mathcal{A}_1, \mathcal{A}_2$ represent respectively, a brute force approach and a divide and conquer approach. Importantly, we note that if the length of the shortest path scales at most \textit{polylogarithmically} with the instance size then algorithm $\mathcal{A}_2$ with efficient $\log{n}$ space finds the shortest $s\mhyphen t$ path in query complexity asymptotically equal to the complexity of Belovs~\cite{belovs2013quantum} algorithm for detecting a path. Thereby, we partially resolve the longstanding open problem of whether there exist path-finding quantum algorithms with query complexity equal (up to \textit{polylog} factors) to the cost of quantum path detection algorithms.
Nevertheless, both our algorithms require structured instances. For the description of the class of graphs, including necessary conditions see section~\ref{sec:conditions}. 
In the paper, in sections~\ref{sec:algorithm1} and~\ref{sec:algorithm2}, respectively, we prove the following theorems:

\begin{theorem}[Simple algorithm for the shortest path finding]
    \label{thm:simplealg}
   Given quantum access to the adjacency list of a graph $G$ and two vertices $s$ and $t$ satisfying Condition~\ref{cond:1}, there exists a quantum algorithm $\mathcal{A}_1$ that with high probability finds the shortest $s\mhyphen t$ in  $\tilde{O}(l^{2}\sqrt{m})$ steps using $O(\log{n})$ qubits and $O(l\log{l})$ classical bits.
\end{theorem}
\begin{theorem}[Divide and conquer quantum algorithm for the shortest path finding]
\label{thm:divideandconquer}
   Given quantum access to the adjacency list of a graph $G$ and two vertices $s$ and $t$ satisfying Condition~\ref{cond:2} or Condition~\ref{cond:1'}, there exists a bounded-error quantum algorithm $\mathcal{A}_2$ that with high probability outputs the shortest $s$-$t$ path in $\tilde{O}(l\sqrt{m})$ steps using $O(\log{n})$ qubits. 
   This algorithm can be parallelised to run in $\tilde{O}(\sqrt{lm})$ depth using $O(l\log{n})$ qubits.  
\end{theorem}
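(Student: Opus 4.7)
The plan is a recursive divide and conquer driven by a single midpoint subroutine, built on the quantum flow primitives already used in $\mathcal{A}_1$. If a vertex on the shortest $s$--$t$ path lying within distance $[l/4, 3l/4]$ from $s$ can be located in $\tilde{O}(\sqrt{lm})$ time, then the recurrence $T(l) = 2T(3l/4) + \tilde{O}(\sqrt{lm})$ resolves to $T(l) = \tilde{O}(l\sqrt{m})$, matching the sequential bound; since only one recursion branch is active at a time, $O(\log m)$ qubits suffice. For the parallel variant, the $2^i$ calls at recursion depth $i$ are independent and can be scheduled simultaneously on separate quantum registers, giving total depth
\[
\sum_{i=0}^{\log l} \tilde{O}\bigl(\sqrt{(l/2^i)m}\bigr) \;=\; \tilde{O}(\sqrt{lm}),
\]
at the cost of $O(l)$ concurrent registers of size $O(\log m)$, hence $O(l\log m)$ total qubits, matching the second bound.

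For the midpoint subroutine itself, the idea is to prepare the Apers--Li approximate quantum flow state on $(s,t)$, which under Condition~\ref{cond:2} concentrates on the true shortest path with (approximately) uniform amplitude along its $l$ edges. A single measurement therefore returns a path vertex whose distance from $s$ is, with constant probability, in $[l/4, 3l/4]$. Candidates are certified by estimating $d(s,v)$ and $d(v,t)$ via the same effective-resistance / Belovs-style detection primitive at cost $\tilde{O}(\sqrt{lm})$ per invocation, and accepted when the two estimates agree to within $l/2$. Standard amplification with $O(\log l)$ independent attempts drives the per-call error to $1/\mathrm{poly}(l)$, so a union bound over the $O(l)$ recursive calls controls the global error.

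The main obstacle, and where most of the proof work should go, is showing that Condition~\ref{cond:2} passes to the sub-instances $(s,v)$ and $(v,t)$ generated by the recursion. The condition is a global effective-resistance gap between the shortest $s$--$t$ path and any alternative route, and it is not immediate that such a gap survives restriction to a subpath: on the sub-instances, parallel routes that were previously long relative to the full shortest path may be short relative to a shorter subpath, potentially collapsing the gap and spoiling the flow-state concentration bound. I expect to need either a hereditary reformulation of Condition~\ref{cond:2} (requiring the gap to hold along every shortest subpath of $s$--$t$), or a careful specification of the recursive subgraph, for instance by pinning $v$ and excluding vertices or edges whose retention would reintroduce short alternative routes between $s$ and $v$ (respectively $v$ and $t$). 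Once this invariant is established, the midpoint subroutine runs in $\tilde{O}(\sqrt{lm})$ uniformly across recursion levels and the two stated complexity bounds follow from the analyses above; the remaining bookkeeping (error propagation through the $O(\log l)$ levels, and absorbing the amplification overhead into the $\tilde{O}$) is routine.
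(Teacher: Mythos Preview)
Your high-level plan matches the paper's---recursive midpoint-splitting via flow-state sampling, with parallelization for the depth bound---but your certification step is a genuine gap, and you have misidentified the hard part.

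You propose to certify a candidate $v$ by estimating $d(s,v)$ and $d(v,t)$ via effective resistance and checking consistency. But the primitives give you only $R_G(s,v)$, not $d(s,v)$, and nothing prevents a vertex $v\notin P$ from having $R_G(s,v)+R_G(v,t)$ close to $R_G(s,t)$: the resistance triangle inequality yields only $\ge R_G(s,t)$, and $R_G(s,t)\le l$ always, so there is no separation from the on-path case. Condition~\ref{cond:2} says nothing about such sums. The paper certifies differently, and this is precisely where Condition~\ref{cond:2} enters: for each sampled \emph{edge} $e$ it estimates $R_{G\setminus e}(s,t)$; by Condition~\ref{cond:2} this exceeds $(1+\alpha)R_P$ when $e\in P$, while by Rayleigh monotonicity it is at most $R_P$ when $e\notin P$, so a multiplicative-$\alpha$ estimate decides membership. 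Taking $k=O(\log l)$ samples and keeping the maximizer drives $\Pr(e^\star\in P)\ge 1-O(1/l)$. Conversely, the heredity of Condition~\ref{cond:2} that you flag as the ``main obstacle'' is two lines in the paper: from $R_P(s,t)=R_P(s,x)+R_P(x,y)+R_P(y,t)$ and $R_{G\setminus e}(s,t)\le R_P(s,x)+R_{G\setminus e}(x,y)+R_P(y,t)$ one gets $R_{G\setminus e}(x,y)-R_P(x,y)\ge R_{G\setminus e}(s,t)-R_P(s,t)\ge\alpha R_P(s,t)\ge\alpha R_P(x,y)$, so no reformulation or subgraph surgery is needed.

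Two smaller points. The recurrence $T(l)=2T(3l/4)+\tilde O(\sqrt{lm})$ as literally written solves to $\tilde O(l^{\log_{4/3}2}\sqrt m)$, not $\tilde O(l\sqrt m)$; you need the constraint that the two subproblem lengths sum to $l$, after which $\sum_j\sqrt{l_j}\le\sum_j l_j=l$ per level and $O(\log l)$ levels give the bound. And the paper does not achieve a $[l/4,3l/4]$ split: because it takes $O(\log l)$ samples and needs (by union bound) all of them to miss the near-endpoint region, that region is taken to have mass $O(1/\log l)$, giving a split factor of only $1-\Theta(1/\log l)$ and hence $O(\log^2 l)$ levels rather than $O(\log l)$.
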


The resources used by our algorithms are concisely summarized in the table below. 

\begin{table}[h!]
    \centering
    \begin{tabular}{|c c c|}
    \hline
     Algorithm's base primitive&  space complexity & circuit depth\\
     \hline
     $\mathcal{A}_1$: sampling the flow state & $O(\log{n})$ & $\tilde{O}(l^2\sqrt{m})$ \\
    $\mathcal{A}_2$: divide and conquer &  $O(\log{n})$ &  $\tilde{O}(l\sqrt{m})$\\
    $\mathcal{A}_2$: divide and conquer (parallelized)&   $O(l\log{n})$ &  $\tilde{O}(\sqrt{lm})$\\
 
    \hline
    \end{tabular}
\end{table}
In table 1. we provide a comparison of our results to the best classical algorithm and best known quantum algorithms. 
\begin{table}[h!]
    \centering
    \begin{tabular}{|c c c c c|}
        \hline
       & Classical & D\"{u}rr et al.~\cite{durr2006quantum} & Jeffery et al.~\cite{path-edge} & this work \\
        \hline
        $s$-$t$ path & $O(m)$ & $\Theta(n)$ & $\tilde{O}(L^{\frac{3}{2}}n)$ & $ \tilde{O}(l\sqrt{m})$ \& $\tilde{O}(\sqrt{lm})$ \\
        
        shortest $s$-$t$ path & $O(m)$ &  $\tilde{O}(\sqrt{nm})$     &--- & $ \tilde{O}(l\sqrt{m}) $ \& $\tilde{O}(\sqrt{lm})$ \\
    
        \hline
    \end{tabular}
    \caption{A table with comparison of algorithms $\mathcal{A}_1$ and $\mathcal{A}_2$ presented in this work, followed by the comparison to other algorithms. In the most bottom table the comparison is given in terms of query complexity (in the adjacency list model except for the result by Jeffery et al. which is in the adjacency matrix model.). Lower-case $m$, $n$ and $l$ stand for the number of edges, vertices and length of the shortest $s\mhyphen t$ path, respectively. Upper-case $L$ is the length of the longest $s$-$t$ path in $G$. }
    \label{tab:my_table}
\end{table}\\

Our results can be compared to the adjacency matrix quantum algorithm of Jeffery et al.~\cite{path-edge} that outputs a path in $\tilde{O}(L^{3/2}n)$ steps, where $L$ is the length of the longest path in $G$. The results of~\cite{path-edge} apply in the adjacency matrix model so cannot be strictly compared to ours, but one can do a rough comparison by replacing $n$ by $\sqrt{m}$ since this is the difference in cost of preparing the quantum flow state in the two models.
In our restricted-problem-instance framework for the $l$ factor we obtain an exponent of `2' in the scaling for algorithm $\mathcal{A}_1$ and `1' for algorithm $\mathcal{A}_2$, where the $l$ factor corresponds to the shortest path length. 

We also make a simple observation that the exponent can be decreased further to $\frac{1}{2}$ if we have access to linear quantum work space and parallelise the subproblems during the divide and conquer algorithm. 
We also note that using the framework assumed in this work there are no quantum algorithms with query complexity lower than $O(\sqrt{lm}(\frac{1}{\varepsilon}+\log{l}))$, which corresponds to the cost of single quantum flow state preparation. One can achieve this lower bound up to \textit{poly(log)} simply by parallelization of the procedures in algorithm $\mathcal{A}_2$. The procedure of parallelization can be applied to any algorithm, and for algorithm $\mathcal{A}_1$, we can parallelise to achieve a circuit depth of $\tilde{O}(l\sqrt{m})$, equal to the cost of one flow state preparation (in algorithm $\mathcal{A}_1$, this takes longer because the state needs to be prepared to greater accuracy).


\vspace{1em}
\textbf{Techniques.} 
Our techniques rely on primitives from quantum electrical resistance estimation~\cite{apers2022elfs}, which allow for approximate preparation of a quantum state representing the electric $s$–$t$ flow. We leverage this to design bounded-error quantum procedures that extract the shortest path, assuming one of the conditions hold.
Both Condition 1 and Condition 2 ensure that in the electric flow from $s$ to $t$, there is a large amount of flow along the shortest $s$-$t$ path, which means that if you sample the electric flow (by measuring the quantum flow state as described in~\cite{apers2022elfs}), then there is a good probability of sampling an edge on the shortest $s$-$t$ path.
See lemmas \ref{lem:flowhalf}, \ref{lem:RG>RP/2}, \ref{lem:probinsubgraph} and \ref{lem:probonpath} in Section~\ref{sec:conditions} for more precise statements of the consequences of the two conditions.

Algorithm $\mathcal{A}_1$ repeatedly samples the $s$-$t$ flow until enough samples have been taken to cover every edge on the shortest $s$-$t$ path, using a coupon collector argument to derive the required number of samples. Then a classical shortest path algorithm is run on the sampled edges. This is only proved to work under Condition 1, because this condition implies a stronger result on the concentration of electric flow along the shortest $s$-$t$ path, namely that in the unit electric $s$-$t$ flow, there is a flow of at least $1/2$ along every edge of the shortest $s$-$t$ path.

Algorithm $\mathcal{A}_2$ works with a divide-and-conquer strategy, first sampling a random edge $e$ from the electric $s$-$t$ flow and then using a random endpoint $x$ of $e$ to subdivide the problem into finding an $s$-$x$ shortest path and an $x$-$t$ shortest path which can be solved recursively. This only works if $x$ is on the shortest $s$-$t$ path, and we ensure this is (likely) the case by sampling multiple edges and choosing the one that maximises the $s$-$t$ effective resistance when removed from the graph. See sections \ref{sec:algorithm1} and \ref{sec:algorithm2} for a full description of the algorithms.

\vspace{1em}
\paragraph{Organization.} In Section~\ref{sec:prerequisites} we cover the necessary background material on electric networks, quantum flow states, loop-erased classical random walks and uniform spanning trees. Section~\ref{sec:conditions} defines the conditions under which our algorithms works. Sections \ref{sec:algorithm1} and \ref{sec:algorithm2} detail the two algorithms and their analyses. We conclude in Section~\ref{sec:conclusion} with discussion and future directions.

\section{Prerequisites}
\label{sec:prerequisites}
\subsection{Notation}
Throughout the work we make use of a lot of variables and notation conventions. For clarity we outline them in the section. $R_G(s,t)$- stands for the value of effective resistance between nodes $s$ and $t$ in graph $G$. $R_{G\setminus e}(s,t)$ stands for the value of effective resistance between nodes s and t in a graph $G$ with and edge/subgraph e removed from $G$. Upper-case $P$ refers to the shortest $s\mhyphen t$ path, i.e $R_P(s,t)$ is the effective resistance of the shortest path which corresponds to the length (or total weight for weighted graphs) of the shortest path which we denote by a lower-case $l$. Throughout the work we refer to $l$ as the length of the shortest path, but the extension to weighted graphs requires only renaming $l$ to minimum weight path, and does not affect the analysis, and every claim extends straightforwardly to weighted graphs.
Lower case $m$ and $n$, refer to, respectively, the number of edges and vertices in a graph $G$. Upper case $L$ denotes the longest $s\mhyphen t$ path in a graph, $d_v$ degree of vertex $v$, $\Delta$ is a maximal degree of a vertex in $P$. All other notations, such as $\varepsilon$, $\delta$ and '$p$' are used in several places and defined locally in the paper. 

\subsection{Adjacency list vs. Adjacency matrix}

The \emph{adjacency list} representation of a graph \(G = (V, E)\) consists of an array of lists. For a graph with \(n\) vertices and \(m\) edges, this structure includes \(n\) lists, one for each vertex, where each list contains all adjacent vertices. This model has a space complexity of \(O(n + m)\), making it advantageous for sparse graphs where the number of edges \(m\) is much less than \(n^2\). Operations such as edge insertion and deletion can be performed in constant time, while checking for the existence of an edge and iterating over all neighbors of a vertex take time proportional to the degree of the vertex.

Conversely, the \emph{adjacency matrix} representation uses a 2D array of size $n \times n$, where each element indicates the presence or absence of an edge between vertex pairs. This model is suitable for dense graphs where the number of edges is close to $n(n-1)/2$, as it requires $O(n^2)$ space. It allows for constant-time complexity for edge existence checks, and both edge insertion and deletion. However, iterating over the neighbors of a vertex requires $O(n)$ time classically.

In a quantum setting, in an adjacency list model one assumes access to an oracle which given a vertex $u$ and an index $i$ outputs the $i^{th}$ vertex of $u$; and in adjacency matrix model the oracle given a query consisting of a pair of vertices tells whether there is an edge between them.
Quantum adjacency list access to a graph allows for easy implementation of the quantum walk operator, which allows us to directly use the results of~\cite{piddock2019quantum,apers2022elfs} which count the cost of their algorithms in terms of the number of quantum walk steps.

\subsection{Electric network formalism}
In this short section we outline several standard facts about the electrical network framework for graphs. This subsection is based on theory outlined in a more comprehensive overview of the topic by Bollobas~\cite{Bollobas1998Modern}.
We denote weighted graphs $G=(V,E,w)$ with vertex set $V$, edge set $E$, and non-negative edge weights $w: E\rightarrow \mathbb{R}^{+}$.
It is convenient to consider such a graph as an electrical network where each edge $e$ represents a resistor of resistance $r_e = 1/w(e)$.

For each edge in the graph we apply an arbitrary direction to get a directed edge set $\vec E$. A flow on $G$ is a vector $f \in \mathbb{R}^{\vec E}$.
Across all possible flows through a network $G$, we identify the electric flow as the one that minimises the energy dissipation function. 

\begin{definition}
    The energy of a flow $f$ in a network $G$ is defined to be:
    \begin{equation}
        \mathcal{E}(f)=\sum_{e\in E} f_{e}^2r_{e}
        \label{eq:energy}
    \end{equation}
     
\end{definition}

By considering $f$ as a vector on the set of directed edges $\vec{E}$, we can write the energy as \begin{equation}
    \mathcal{E}(f)= f \cdot f.
    \label{eq:energydotproduct}
\end{equation}
Here the inner product $f\cdot g$ denotes $f^{\top} W^{-1} g$ where $W^{-1}$ is a diagonal matrix with entries $r_e$ on the diagonal (for an unweighted graph $W=I$ and this is the usual dot product $f\cdot g=f^{\top} g$).

\begin{definition}
The electric flow is precisely the flow that minimises the function 
    $\mathcal{E}(f)$. The electric flow obeys Kirchhoff's laws and Ohm's law.
    In a graph $G$, the effective resistance between $s$ and $t$ is the energy of the unit electric flow from $s$ to $t$:
    \begin{equation}
        R_{G}(s,t)=\sum_e f_{e}^2r_e=\mathcal{E}(f)
    \end{equation}
\end{definition}

If the potential difference $\Delta\phi_e$ is present across an edge $e$ of resistance $r(e)$, the electric current of magnitude $f_e=\frac{ \Delta\phi_e}{r(e)}$ will flow through edge $e$, in accordance with \textit{Ohm's law}. \textit{Kirchhoff's potential law} asserts that across any cycle $C$ in $G$ the sum of potential drops will be equal to zero: $\sum_{e\in C}\Delta\phi_e=0$. \textit{Kirchhoff's current law} states that the total current is conserved across the network. The electrical flow can be easily shown to satisfy both Kirchoff's and Ohm's laws.

From \textit{Rayleigh’s monotonicty principle} it follows that removing an edge of resistance $r_e$ from the network $G$ never decreases effective resistance between any pair of nodes. It is also well known that similarly to the standard edge-distance (also refereed to as topological distance), the resistance distance also constitutes a metric on $G$. Nevertheless, resistance distance does not correspond uniquely to the topological distance, i.e there are plenty of possible graph structures that have different topological distance between two nodes but the same resistance distance.

For paths connected in parallel the effective $s\mhyphen t$  resistance is always smaller then the topological length of the shortest path connecting $s$ and $t$. It is also true that if there is a one short path of length $l$, then connecting another long path of length $L$ in parallel does not alter the $s\mhyphen t$ effective resistance substantially as long as $L>>l$. This will be of importance when we introduce $condition \hspace{0.1cm}1$, which puts a lower bound on how small can the effective $s\mhyphen t$ resistance be in the graph $G$. We refrain from providing a more comprehensive overview of the electrical network theory, in order to retain the relative compactness of the entire work. Readers wishing to investigate the theory further we refer to the standard textbook of Bollobas~\cite{Bollobas1998Modern}.

\subsection{The quantum flow state and estimation of the effective resistance}
The first connection between quantum walks on graphs and electric flow in a network has been established by Belovs~\cite{belovs2013quantum}. Later on, Piddock~\cite{piddock2019quantum,apers2022elfs} showed that one can efficiently obtain a quantum electric flow state for an arbitrary finite graph. The flow state is a weighted superposition of edges on a graph $G$ with weights proportional to the amount of electric flow going through a given edge. The state is normalised by the square root of the effective resistance between the source vertex $s$ and the sink vertex $t$.
\begin{equation}
    \ket{f_G^{s\rightarrow t}}=\frac{1}{\sqrt{R_{G}(s,t)}}\sum_{e \in \vec{E}} f_e\sqrt{r_e}\ket{e}
    \label{eq:electricflowstate}
\end{equation}
The procedure of preparing $\ket{f^{s\rightarrow t}}$ is based on running a phase estimation on a quantum walk operator and an antisymmetrized star state of the starting vertex $s$. The star state for a vertex $x$ is defined as $\ket{\phi_x}=\frac{1}{\sqrt{d_x}}\sum_y\sqrt{w_{xy}}\ket{xy}$ where $y$ are vertices connected to $x$ by and edge with weight $w_{xy}$. Then, given a graph with two special vertices called a source $s$ and a sink $t$ one defines two unitary reflection operators: 
\begin{equation}
    D_x= 2\ket{\phi_x}\bra{\phi_x}-I
\end{equation}
and 
\begin{equation}
    \bigoplus_{x\notin \{s,t\} } D_x
\end{equation}
The first one represents reflections around the star state of vertex $x$ and the second one represents a reflection about the span of all star states but $\ket{\phi_s}$ and $\ket{\phi_t}$.
Together with the SWAP operator $SWAP\ket{xy}=\ket{yx}$, the operators form an absorbing quantum walk operator $U$: \begin{equation}
    U=SWAP\cdot \bigoplus_{x\notin \{s,t\} } D_x
\end{equation}
In their work~\cite{piddock2019quantum,apers2022elfs} Piddock observed that the projection of an antisymmetrised star-state of the source vertex $s$ $\ket{\phi^-_s}=\frac{I-SWAP}{\sqrt{2}}\ket{\phi_s}$ onto the invariant subspace of operator $U$, yields the electric flow state $\ket{f^{s\rightarrow t}}$. Let $\Pi_U$ be a projection on the invariant subspace of the unitary $U$, then \begin{equation}
    \Pi_U \ket{\phi_s^-}= \frac{1}{\sqrt{R_G(s,t)d_s}}\ket{f^{s\rightarrow t}}
\end{equation}

It is shown in~\cite{piddock2019quantum,apers2022elfs} that performing phase estimation on $\ket{\phi_s^-}$ with operator $U$ to precision $\delta= \frac{\varepsilon}{R(s,t)m}$, and measuring the phase register to be $'0'$, leaves the state final state $\ket{f}$ $\varepsilon$- close to the flow state $\ket{f^{s\rightarrow t}}$ with probability $p \geq \frac{1}{R(s,t)d_s}$. The expected number of queries to obtain an $\varepsilon$-approximation of $\ket{f^{s\rightarrow t}}$ is shown to be equal to $O(\frac{\sqrt{R(s,t)m}}{\varepsilon})$.
From the Theorem 8 from~\cite{apers2022elfs} we know that:
\begin{theorem}[Apers and Piddock~\cite{apers2022elfs}]
\label{thm:prepareflowstate} 
    Given an upper bound on the escape time $ET_{s}$ from vertex $s$ $\hat{ET}_{s} \geq ET_{s}$, there is a quantum walk algorithm that returns the state $\ket{f^{'}}$ such that $||\ket{f^{'}}-\ket{f}||_2\leq \varepsilon_1$. It requires $O(\sqrt{ET_s}(\frac{1}{\varepsilon_1}+ \log{(R_G(s,t)d_s})) )$ queries to the quantum walk operator. 
    
\end{theorem}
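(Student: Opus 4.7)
The plan is to combine three ingredients laid out just above the statement: (i) the identity $\Pi_U \ket{\phi_s^-} = \frac{1}{\sqrt{R_G(s,t)d_s}} \ket{f^{s\to t}}$; (ii) phase estimation applied to the absorbing quantum walk operator $U$; and (iii) amplitude amplification calibrated by the prior bound $\hat{ET}_s$. I would start by preparing the antisymmetrised star state $\ket{\phi_s^-}$ with $O(1)$ adjacency-list queries, then run phase estimation of $U$ on this state to some precision $\delta$ and measure the phase register. When the outcome is $0$, the post-measurement state is proportional (up to phase-estimation slack) to $\Pi_U \ket{\phi_s^-}$, which by (i) is the sought flow state.

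That outcome occurs with probability $p \geq 1/(R_G(s,t) d_s)$, and the conditional output is within $\varepsilon_1$ of $\ket{f^{s\to t}}$ provided phase estimation resolves the $+1$-eigenspace of $U$ tightly enough. A standard phase-estimation error analysis, combined with the spectral-gap bound that relates the smallest nonzero eigenphases of $U$ to the random-walk escape time (roughly, $\Omega(1/\sqrt{ET_s})$), shows that taking $\delta = \Theta(\varepsilon_1/\sqrt{ET_s})$ is sufficient, so a single run costs $O(1/\delta) = O(\sqrt{ET_s}/\varepsilon_1)$ walk queries.

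To boost $p$ to a constant I would invoke amplitude amplification. Since only $\hat{ET}_s \geq ET_s$ is supplied, $p$ is merely lower-bounded and not known exactly; I therefore use an exponential-search / fixed-point variant that does not require the exact value of $p$. Crucially, the amplification layer can be run at a coarser phase-estimation precision — just fine enough to correctly flag the ``$0$'' outcome rather than resolve the flow state to accuracy $\varepsilon_1$ — and contributes cost $O(\sqrt{ET_s}\log(R_G(s,t) d_s))$ additively. Summing with the precision-driven term above yields the claimed bound $O\!\bigl(\sqrt{ET_s}(1/\varepsilon_1 + \log(R_G(s,t)d_s))\bigr)$.

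The main obstacle will be the joint error analysis of phase estimation and amplitude amplification: I must quantify how badly the non-principal eigencomponents of $U$ leak through the phase-$0$ filter, show that this leakage is at most $\varepsilon_1$ at the chosen $\delta$, and verify that iterating amplification does not compound this error beyond $O(\varepsilon_1)$. The pivotal structural input is the identification of $ET_s$ with (essentially) the inverse squared spectral gap of the absorbing walk restricted to the orthogonal complement of $\ket{f^{s\to t}}$, which is exactly what drives the quadratic speedup from $ET_s$ to $\sqrt{ET_s}$ in the walk-step count.
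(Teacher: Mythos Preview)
This theorem is not proved in the paper at all: it is imported wholesale from an external reference. The text immediately preceding the statement reads ``From the \textbf{Theorem 8} from \cite{apers2022elfs} we know that:'', and after the statement the paper moves directly to bounding $ET_s$ by $R_G(s,t)m$ and then quotes the companion effective-resistance-estimation result (also attributed to Apers and Piddock) without proof. So there is no in-paper argument to compare your proposal against.

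Your sketch is a reasonable high-level reconstruction of the kind of argument that underlies the cited result---phase estimation on the absorbing walk $U$ acting on $\ket{\phi_s^-}$, identification of the $+1$-eigenspace component with the flow state via $\Pi_U\ket{\phi_s^-}=\tfrac{1}{\sqrt{R_G(s,t)d_s}}\ket{f^{s\to t}}$, and amplitude amplification to handle the $1/(R_G(s,t)d_s)$ success probability---and the two-scale trick (coarse precision for the amplification layer, fine precision $\Theta(\varepsilon_1/\sqrt{ET_s})$ for the final state) is exactly what produces the additive $\tfrac{1}{\varepsilon_1}+\log(R_G(s,t)d_s)$ structure in the query count. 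But since the present paper treats this as a black box, any detailed verification would have to be against \cite{apers2022elfs} rather than against anything here.
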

The escape time $ET_s$ is always upper bounded by \begin{equation}
    ET_s< R_G(s,t)m 
\end{equation} because \begin{equation}
    ET(s)\leq HT(s,t)< CT(s,t) =2R_G(s,t)m 
\end{equation}
Where $HT(s, t)$ is the hitting time from $s$ to $t$, and $CT(s, t)$ is the commute time from $s$ to $t$. The last equality was shown to be true by~\cite{commuteandcover}.  It is shown that after the expected number of calls to the quantum walk operator $U$ phase estimation procedure succeeds (outputs an $\varepsilon_1\mhyphen approximation$ in the Hilbert-Schmidt norm of the flow state, i.e 
 \begin{equation}
     || \ket{f^{'}}-\ket{f} ||_2 \leq \varepsilon_1
 \end{equation} where $\ket{f^{'}}$ is the prepared state).

In~\cite{apers2022elfs} the authors outline a procedure that allows to $\varepsilon\mhyphen multiplicatively$ estimate the effective resistance between two vertices in a graph $G$. The following theorem encapsulates that fact:\\
\begin{theorem}[Apers and Piddock~\cite{apers2022elfs}]
\label{thm:estimateresistance}    Given an upper bound on the escape time from vertex s $\hat{ET} \geq ET$, there is a quantum walk algorithm that
$\varepsilon\mhyphen multiplicatively$ estimates $R_G(s,t)d_s$, where $d_s$ is a degree of vertex $s$, using a number of quantum walk steps
$$O(\sqrt{\hat{ET}}(\frac{1}{\varepsilon^{\frac{3}{2}}}+\log{(R_G(s,t)d_s)}))$$
\end{theorem}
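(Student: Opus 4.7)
The identity $\|\Pi_U \ket{\phi_s^-}\|^2 = 1/(R_G(s,t)d_s)$ derived in the preceding subsection reduces the task to multiplicatively estimating the squared amplitude that $\ket{\phi_s^-}$ places on the invariant subspace of $U$. My plan is therefore to prove Theorem~\ref{thm:estimateresistance} by wrapping a phase-estimation implementation of the projector $\Pi_U$ inside amplitude estimation, reusing the machinery behind Theorem~\ref{thm:prepareflowstate}.

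First, I would build an approximate projection unitary $A$: run phase estimation on $U$ applied to $\ket{\phi_s^-}$ to precision $\delta$ and flag the outcome ``phase $= 0$'' as the good branch. The probability of the good branch equals $p:=1/(R_G(s,t)d_s)$ up to a correction controlled by $\delta$ and by the spectral gap of $U$ on the orthogonal complement of its invariant subspace. Because this gap is governed by the walk mixing time through $1/\sqrt{ET_s}$, the cost of one call of $A$ is of the order of $1/\delta$ walk operator queries and the error inherited by the good-branch probability is a controlled polynomial in $\delta\sqrt{\hat{ET}}$.

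Second, I would apply the amplitude estimation of Brassard--H\o{}yer--Mosca--Tapp to the good-branch indicator of $A$ to obtain a multiplicative estimate $\tilde p$ of $p$, and then return $1/\tilde p$: if $|\tilde p - p| \leq \varepsilon p$ then $|1/\tilde p - 1/p| \leq 2\varepsilon/p$ for small $\varepsilon$, delivering the desired multiplicative estimate of $R_G(s,t)d_s$. The logarithmic term $\log(R_G(s,t)d_s)$ enters additively from a geometric powering loop that first locates the correct order of magnitude of $p$ before refining, exactly as in the flow-state preparation algorithm; without such a loop one cannot set the number of Grover iterates without already knowing $p$.

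The main obstacle, and the step that deserves real care, is the calibration of the phase-estimation precision $\delta$ against the amplitude-estimation depth. Choosing $\delta$ too coarse lets eigenvectors of $U$ with phase close to but not equal to zero leak into the good branch and destroys the multiplicative character of the estimate; choosing $\delta$ too fine inflates every call of $A$. Exhibiting a choice that produces exactly the advertised $\sqrt{\hat{ET}}\,\varepsilon^{-3/2}$ scaling, rather than the naive $\sqrt{\hat{ET}}\,\varepsilon^{-2}$ or $\sqrt{\hat{ET}\,R_G(s,t)d_s}\,\varepsilon^{-1}$ bounds that fall out of direct composition, requires the spectral perturbation analysis of the absorbing walk operator $U$ worked out in \cite{apers2022elfs}, which controls how finite-precision phase estimation interacts with the Grover iterate of amplitude estimation.
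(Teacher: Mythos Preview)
Your proposal is correct and matches the paper's approach, but note that the paper does not actually prove this theorem: it is quoted as a result of Apers and Piddock \cite{apers2022elfs}, and the paper only offers the one-sentence gloss that ``the algorithm is based on estimation of an amplitude of the flow state in the output state of the quantum phase estimation procedure \ldots\ applying quantum amplitude estimation returns an $\varepsilon$-multiplicative approximation of $R_G(s,t)$.'' Your phase-estimation-inside-amplitude-estimation sketch is exactly this, fleshed out with the calibration discussion that the present paper defers entirely to \cite{apers2022elfs}.
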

The algorithm is based on estimation of an amplitude of the flow state in the output state of the quantum phase estimation procedure. The amplitude is inversely proportional to the effective resistance $R_G(s,t)$ such that applying quantum amplitude estimation returns $\varepsilon\mhyphen multiplicative$ approximation of $R_G(s,t)$.

\subsection{Uniform Spanning Trees}

Consider a (classical) random walk algorithm that finds the shortest path between $s$ and $t$ by running the \emph{loop erased} random walk from $s$ to $t$. 
Let $\Pr_{LE(s\rightarrow t)}(P)$ be the probability that the loop erased walk from $s$ to $t$ returns the path $P$. 

Let $\T(G)$ be the set of spanning trees of $G$. For an unweighted graph, we are interested in the random process of uniformly sampling a tree from $\T(G)$.

\subsubsection{Uniform spanning trees on weighted graphs}
This is generalised to weighted trees as follows. For a tree $T$ of $G$ we define the weight of $T$ to be the product of the weights of all the edges in $T$ and write:
$w(T)=\prod_{e\in T} w(e)$.
For a set of trees $\T$, define $w(\T)=\sum_{T\in \T}w(T)$ be the sum of the weights of all the trees in $\T$. Now consider the distribution $\mu$ defined by taking a tree $T$ with probability 
\[\Pr_{T'\sim \mu}[T'=T]=\frac{w(T)}{w(\T(G))}.\]
Note that for an unweighted graph, $w(T)=1$ for any tree $T$ and $w(\T)=|\T|$ for any set of trees $\T$, so this distribution matches the uniform distribution on $\T(G)$.

\subsubsection{Loop erased random walk and uniform spanning trees}
There is a tight connection between the loop erased random walk and random sampling of spanning trees. For a path $P$ between vertices $s$ and $t$, the probability that the loop erased random walk from $s$ to $t$ takes the path $P$ is equal to the probability that a random spanning tree contains the path $P$:
\begin{equation}
    \Pr_{P'\sim \LE{s}{t} }[P'=P] = \Pr_{T\sim \mu}[P\subseteq T]=\frac{w(\{T \ | \ T \in \T(G) \text{ and } T\supseteq P\}) } {w(\T(G))}=\frac{w(\T_P)}{w(\T(G))}
    \label{eqn:LEprobtrees}
\end{equation}
where we have introduced the notation $\T_P=\{T \ | \ T \in \T(G) \text{ and } T\supseteq P\}$ for the set of spanning trees that contain the path $P$.

\subsubsection{Effective resistance and spanning trees}
There is also a close connection between spanning trees and effective resistance. 
\begin{equation}
    R_{G(s,t)}=\frac{w(\T(G/\{s,t\}))}{w(\T(G))}
    \label{eqn:Rspanningtrees}
\end{equation}
where $G/\{s,t\}$ denotes the graph $G$ with vertices $s$ and $t$ contracted to a single vertex. 
Observe that the trees in $\T(G/\{s,t\})$ can be interpreted as spanning forests of $G$, with two connected components separating $s$ and $t$.

If $s$ and $t$ are the two endpoints of an edge $e$, then any tree $T \in \T(G/\{s,t\})$ can be identified with a spanning tree $T\cup \{e\}$ of $G$, by adding the edge $e$.
Since $w(T\cup e)=w(e)w(T)$, we have $w(\T(G/\{s,t\}))=w(\T_{\{e\}})/w(e)$.
Therefore, when $s$ and $t$ are endpoints of the edge $e$:
\begin{equation}
    R_{G(s,t)}=\frac{w(\T(G/\{s,t\}))}{w(\T(G))}=r_e\frac{w(\T_{\{e\}})}{w(\T(G))}=r_e\Pr_{T\sim \mu}[e \in T]
    \label{eqn:PeinT}
\end{equation}

\subsubsection{Negative association}
One important property of the uniform spanning tree distribution is \emph{negative association} (see Theorem 2.1 of~\cite{Grimmett_2018}). If $A,B$ are disjoint subsets of the edge set $E$, then, if $T$ is a random spanning tree:
\begin{equation}
    \Pr[(A\cup B) \subseteq T] \le \Pr[A\subseteq T]\Pr[B\subseteq T]
    \label{eqn:negativeassociation}
\end{equation}
So for a path $P$, 
\[\Pr[P\subseteq T]\le \prod_{e\in P}\Pr[e \in T]\]

\section{Class of graphs}
\label{sec:conditions}
We consider two conditions which characterise the types of graphs where our path finding algorithms work efficiently.

\subsection{Shortest path by effective resistance distance}
Intuitively, our first condition requires that there exists a path $P$ between $s$ and $t$ which is shorter than all other routes between $s$ and $t$, even when measured using the effective resistance distance. 
When computing the effective resistance distance of the alternative routes between $s$ and $t$, we wish to consider multiple paths at once and so we compare $P$ against other subgraphs $X \subset G$.


Let $P$ be the shortest path between vertices $s$ and $t$.
\begin{condition}
    \label{cond:1}
    For any subgraph $X \subset G$ such that $P \not\subseteq X$, we have $R_X(s,t) > R_P(s,t)$.
\end{condition}

This condition requires that $P$ be the unique shortest path between $s$ and $t$ when measured using the resistance distance, compared not only to all other paths but all subgraphs that do not contain the whole $P$.
It follows from Condition~\ref{cond:1} that:
\begin{corollary}
    \label{cor:cond1}
     For any edge $e \in P$,  we have $R_{G\setminus e}(s,t) \geq R_P(s,t)$.
\end{corollary}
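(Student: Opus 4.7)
The plan is to derive this corollary as an almost immediate consequence of Condition~\ref{cond:1} by instantiating it with the specific subgraph $X = G \setminus e$.

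First I would observe that for any edge $e \in P$, the subgraph $X := G \setminus e$ obtained by deleting $e$ from $G$ cannot contain all of $P$: since $e \in P$ and $e \notin X$, we have $P \not\subseteq X$. This is exactly the hypothesis of Condition~\ref{cond:1}, so the condition applies to $X$ and yields $R_X > R_P$, which already gives the desired weak inequality $R_{G \setminus e} \geq R_P$.

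The one subtlety to address is whether $G \setminus e$ still has $s$ and $t$ in the same connected component, since $R_X$ is only finite in that case. If $G \setminus e$ disconnects $s$ from $t$, then $R_{G \setminus e} = \infty$ by the standard convention, and the inequality $R_{G \setminus e} \geq R_P$ holds trivially. Otherwise, the previous paragraph applies and we get even the strict inequality. Either way, the conclusion follows.

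The main obstacle, such as it is, is purely notational: ensuring that Condition~\ref{cond:1} is being read as applying to arbitrary subgraphs $X$ (including those obtained by edge deletion, not just connected ones), and reconciling the strict inequality of Condition~\ref{cond:1} with the non-strict statement of the corollary. Since the corollary is stated with $\geq$, it is in fact a weaker statement than what Condition~\ref{cond:1} gives, and the proof is therefore immediate once the subgraph $X = G \setminus e$ is identified as the right object to apply the condition to.
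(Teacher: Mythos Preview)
Your proposal is correct and matches the paper's own reasoning: the paper simply states that the corollary ``follows from \textbf{condition 1}'', which is exactly the instantiation $X = G\setminus e$ you describe. The paper additionally remarks that the converse holds via Rayleigh's monotonicity principle (so the corollary is in fact equivalent to Condition~\ref{cond:1}), but for the forward direction your argument is precisely what is intended.
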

In fact it is not hard to show that Corollary~\ref{cor:cond1} is equivalent to condition~\ref{cond:1}, by noting that for $e \in P\setminus X$, we have $R_X \geq R_{G\setminus e}$ by Rayleigh's monotonicity principle.

Condition~\ref{cond:1} turns out to be a very strong condition. In the appendix we prove that 

\begin{lemma}
    \label{lem:flowhalf}
    Condition~\ref{cond:1} implies that the electric flow through every edge $e\in P$ satisfies:
    \begin{equation}
        f_{e}\geq 1/2
    \end{equation}
\end{lemma}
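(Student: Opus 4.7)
The plan is to argue by contradiction. Fix an edge $e_0 \in P$ and let $\alpha := f^{*}_{e_0}$ denote the current through $e_0$ (oriented from $s$ to $t$ along $P$) in the unit $s\mhyphen t$ electric flow $f^{*}$; I will show $\alpha < 1/2$ is impossible. The key construction is the auxiliary flow $g := f^{*} - \alpha f^{P}$, where $f^{P}$ is the unit flow routed along $P$ (so $f^{P}_{e}=1$ for each $e \in P$ oriented from $s$ to $t$, and $0$ elsewhere). By construction $g$ vanishes on $e_0$ and carries net divergence $\pm(1-\alpha)$ at $\{s,t\}$; assuming $\alpha < 1$, the rescaled flow $g/(1-\alpha)$ is then a unit $s\mhyphen t$ flow living entirely in $G\setminus e_0$.

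The next step is to evaluate $\mathcal{E}(g)$ via the bilinear form~(\ref{eq:energydotproduct}):
\[
\mathcal{E}(g) = \mathcal{E}(f^{*}) - 2\alpha \langle f^{*}, f^{P}\rangle + \alpha^2 \mathcal{E}(f^{P}).
\]
Here $\mathcal{E}(f^{*}) = R_G(s,t)$ and $\mathcal{E}(f^{P}) = R_P$. The crucial identity is $\langle f^{*}, f^{P}\rangle = R_G(s,t)$: since $\langle f^{*}, f^{P}\rangle = \sum_{e \in P} f^{*}_{e}\, r_e$ and Ohm's law equates each summand with the potential drop across $e$, the sum telescopes along $P$ to $\phi(s)-\phi(t) = R_G(s,t)$. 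Putting the pieces together gives $\mathcal{E}(g) = (1-2\alpha)\, R_G(s,t) + \alpha^2\, R_P$.

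To close the argument I apply Thomson's principle to the rescaled flow $g/(1-\alpha)$ inside $G\setminus e_0$ and invoke Corollary~\ref{cor:cond1}, obtaining
\[
\mathcal{E}(g) \;\geq\; (1-\alpha)^2\, R_{G\setminus e_0}(s,t) \;\geq\; (1-\alpha)^2\, R_P.
\]
Combining this with the explicit expression for $\mathcal{E}(g)$ and rearranging yields $(1-2\alpha)\, R_G(s,t) \geq (1-2\alpha)\, R_P$. Under the standing assumption $\alpha < 1/2$ the factor $1-2\alpha$ is strictly positive, so $R_G(s,t) \geq R_P$. However $f^{P}$ is itself a feasible unit $s\mhyphen t$ flow in $G$, whence Thomson's principle gives $R_G(s,t) \leq R_P$; the resulting equality together with strict convexity of $\mathcal{E}$ on the affine space of unit $s\mhyphen t$ flows forces $f^{*}=f^{P}$ and hence $\alpha = 1$, contradicting $\alpha < 1/2$. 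I anticipate the main subtlety to be the orientation bookkeeping in the identity $\langle f^{*}, f^{P}\rangle = R_G(s,t)$; degenerate cases such as $G\setminus e_0$ disconnecting $s$ from $t$ pose no difficulty, since then the Thomson bound is vacuously infinite and the contradiction appears immediately.
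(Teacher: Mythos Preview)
Your argument is correct. Both proofs rest on the same decomposition $g = f^{*} - \alpha f^{P}$ (the paper writes $f^{C} = g/(1-\alpha)$) and the same Thomson bound $\mathcal{E}(g) \ge (1-\alpha)^{2} R_{G\setminus e_0}$ combined with Condition~\ref{cond:1}, but the two routes diverge in how they extract the inequality $f_{e}\ge 1/2$. The paper parametrises $f(\alpha)=\alpha f^{P}+(1-\alpha)f^{C}$, solves $\tfrac{d}{d\alpha}\mathcal{E}(f(\alpha))=0$ for an explicit rational expression for $f_{e}$ in terms of the inner products $f^{P}\cdot f^{P}$, $f^{C}\cdot f^{C}$, $f^{P}\cdot f^{C}$, and then bounds the numerator/denominator using $f^{P}\cdot f^{P}\le f^{C}\cdot f^{C}$. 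You instead invoke the identity $\langle f^{*},f^{P}\rangle = R_{G}$, proved cleanly by telescoping Ohm's law along $P$; this collapses the energy computation to $\mathcal{E}(g)=(1-2\alpha)R_{G}+\alpha^{2}R_{P}$ and reduces everything to the single comparison $(1-2\alpha)R_{G}\ge(1-2\alpha)R_{P}$, finished off by uniqueness of the energy minimiser. Your identity is in fact equivalent to the first-order optimality condition the paper is implicitly solving, but making it explicit via the potential telescoping is a nicer and more transparent step. The paper's route has the minor advantage of yielding a closed-form expression for $f_{e}$, which could be used to get quantitative improvements under the stronger Condition~\ref{cond:2}; your route is shorter and avoids the algebra of the minimisation. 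The only cosmetic point: in the disconnected edge case it is cleaner to say that the existence of the unit flow $g/(1-\alpha)$ in $G\setminus e_{0}$ already forces $\alpha=1$, rather than appealing to an ``infinite'' Thomson bound.
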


\begin{lemma}
    \label{lem:RG>RP/2}
    Let $P$ be a path in $G$ such that Condition~\ref{cond:1} holds. Then $R_G(s,t) \geq R_P(s,t) / 2$.
\end{lemma}

This is useful because of the following lemma:
\begin{lemma}
    \label{lem:probinsubgraph}
    Let $G$ be a graph and $X$ a subgraph of $G$ such that $s$ and $t$ are connected in $X$. Then if $e$ is a random edge sampled from the electric flow state:
    \[\Pr_{e \sim \elfs{s}{t}}[e\in X]\geq\frac{R_{G}(s,t)}{R_{X}(s,t)},\]
    where $R_X(s,t)$ is the effective resistance between $s$ and $t$ in the subgraph $X$.
    
    \end{lemma}
    \begin{proof}
         The probability of sampling an edge $e$ from the electric flow state is  is $\frac{f_e^2r_e}{R_G(s,t)}$, which follows directly from the structure of the flow state (eq.\eqref{eq:electricflowstate}). 
         For the remainder of this proof, we drop the $(s,t)$ labels for convenience. Let the voltages of the electric flow $f$ be $\{\phi_x\}_{x \in G}$, with $\phi_s=R_G$ and $\phi_t=0$. 
         By Ohm's Law the flows are related to the voltages as $f_{xy}=(\phi_x - \phi_y)/r_{xy}$ and so the probability of sampling an edge in $X$ is:
        \begin{equation}
            \Pr[e\in X]= \frac{1}{R_G}\sum_{xy\in X} \frac{(\phi_x-\phi_y)^2}{r_{xy}}
            =\frac{R_G}{R^2_X}\cdot \sum_{xy\in X} \frac{1}{r_{xy}}(\phi_x\frac{R_X}{R_G}-\phi_y\frac{R_X}{R_G})^2
        \end{equation}
        The sum here is equal to the energy of the potentials $\{\phi'_x = \phi_x \frac{R_X}{R_G}\}_{x \in X}$ on the subgraph $X$, which take the values $\phi'_s=R_X$ at $s$ and $\phi'_t=0$ at $t$.
        
        These potentials define a flow on $G$, but do not define a flow on $X$. Nevertheless, by Theorem 1 of Chapter IX of~\cite{Bollobas1998Modern}, the energy of such potentials is minimised by the potentials of the electric flow, and thus is lower bounded by the effective resistance $R_X$.
        This gives
        \begin{equation}
            \sum_{xy\in X} \frac{1}{r_{xy}}(\phi_x\frac{R_X}{R_G}-\phi_y\frac{R_X}{R_G})^2\geq R_X
        \end{equation} 
       Thus we obtain the bound claimed in the statement of the lemma:
        \begin{equation}
            \Pr[e\in X]=\frac{R_G}{R^2_X}\cdot \sum_{xy\in X} \frac{1}{r_{xy}}(\phi_x\frac{R_X}{R_G}-\phi_y\frac{R_X}{R_G})^2\geq\frac{R_G}{R_X}
        \end{equation}
    \end{proof}
    
\subsection{Shortest path that can be found by classical random walk}
The second condition that we consider in this paper is designed to capture the case where a classical random walk would be likely to find the shortest path $P$.

\begin{condition}
\label{cond:2}
    The probability that a classical loop erased random walk returns the shortest path $P$ between $s$ and $t$ is at least $0.537$.
\end{condition}

\begin{lemma}
    \label{lem:probonpath}
    Let $P$ be an $st$ path, and let $e$ be sampled from the electric flow between $s$ and $t$.
    Then
\[\Pr_{e \sim \elfs{s}{t}}[e\in P] \ge \frac{R_G(s,t)}{R_P(s,t)} \ge \Pr_{P'\sim \LE{s}{t} }[P'=P]\]
\end{lemma}

\begin{proof}
    The first inequality is the special case of Lemma~\ref{lem:probinsubgraph} where the subgraph $X$ is the path $P$. For the second inequality, we use the spanning tree connection for the loop erased random walk path \eqref{eqn:LEprobtrees} and for the effective resistance \eqref{eqn:Rspanningtrees}.
    It therefore remains to show that:
    \[w(\T(G/\{s,t\})) \ge R_P \times w(\T_P) \]
    where $\T_P=\{T \ | \ T \in \T(G) \text{ and } T\supseteq P\}$.

    Let $T$ be a spanning tree of $G$ with $P\subseteq T$, and let $e \in P$. Observe that removing the edge $e$ from $T$ disconnects $s$ and $t$, and so $T\setminus e$ is an element of $\T(G/\{s,t\})$.
   The map $g$ \begin{align*}
         g: \T_P \times E &\to \T(G/\{s,t\}) \\
         (T,e) &\mapsto T\setminus e
    \end{align*} is an injection.

    Furthermore, the weight $w(T\setminus e)=w(T)/w(e)$ and so we have 
    \[w(\T(G/\{s,t\})) \ge w(\Im(g)) = \sum_{T\in \T_P} \sum_{e\in E} w(T\setminus e)=\sum_{e\in E} \frac{1}{w(e)} \sum_{T\in \T_P} w(T)=R_P w(\T_P)\]

\end{proof}

\section{Algorithm $\mathcal{A}_1$: simple approach}
\label{sec:algorithm1}
Under the condition 1 described above one may devise in a straightforward way an efficient algorithm, where a quantum procedure is followed by a classical (or quantum) post-processing. 
If one focuses only on repeated sampling of the quantum flow state $\ket{f_G^{s\rightarrow t}}$, then the assumed condition assures that $O(l\log{l})$ samples suffice to sample all edges from the shortest path with high probability. Then one can run a classical shortest path algorithm, such as Thorup, Dijkstra or Breadth-First-Search algorithms on the set of sampled edges or use algorithm $\mathcal{A}_2$ of this work. In this approach, the sampling procedure serves as a kind of graph sparsifier that with high probability preserves the shortest $s\mhyphen t$ path. Thereby, the quantum procedure reduces the instance size from $m$ to $poly(l)$. 

A standard ``\textit{coupon collector}'' argument tells us how many samples we need to preserve the shortest $s\mhyphen t$ path.
For completeness we state and prove the precise version we need in Lemma~\ref{lem:couponcollector}.

\begin{lemma}
\label{lem:couponcollector}
    Given a random variable $E$ on a finite set $S$. Let $Y\subseteq S$ be such that for each $y\in Y$ $\mathbb{P}(E=y)\geq p$. Then the expected number \# of samples of E to get each element of $Y$ at least once is $\mathbb{E}(\#)=O(\frac{\log{|Y|}}{p})$.
\end{lemma}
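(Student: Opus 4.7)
The plan is to carry out a standard coupon-collector analysis, adapted to the fact that the individual probabilities are only lower-bounded (not equal to $1/|Y|$). Write $k=|Y|$ and enumerate $Y=\{y_1,\dots,y_k\}$ with $\Pr(E=y_j)\geq p$ for each $j$. Let $T$ be the first time every element of $Y$ has been observed, and decompose $T=T_1+T_2+\cdots+T_k$, where $T_i$ is the number of additional samples drawn after the $(i-1)$-st distinct element of $Y$ has been collected, up to and including the draw that produces the $i$-th distinct element.

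First I would argue that once any subset $Y'\subsetneq Y$ of size $i-1$ has already been collected, the probability that the next sample lies in $Y\setminus Y'$ is
\[
\sum_{y\in Y\setminus Y'}\Pr(E=y)\;\geq\;(k-i+1)\,p,
\]
simply because there are $k-i+1$ remaining elements, each individually contributing at least $p$. Hence $T_i$, conditional on which particular elements have been collected so far, is stochastically dominated by a geometric random variable with success probability $(k-i+1)p$, giving $\mathbb{E}[T_i]\leq 1/((k-i+1)p)$. This dominance holds for every realization of the previously collected set, so it holds unconditionally.

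Summing over $i$, by linearity of expectation,
\[
\mathbb{E}[T]\;=\;\sum_{i=1}^{k}\mathbb{E}[T_i]\;\leq\;\frac{1}{p}\sum_{i=1}^{k}\frac{1}{k-i+1}\;=\;\frac{H_k}{p}\;=\;O\!\left(\frac{\log k}{p}\right),
\]
where $H_k$ is the $k$-th harmonic number. This yields the claimed bound $\mathbb{E}(\#)=O(\log|Y|/p)$.

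I do not anticipate any real obstacle: the argument is entirely routine. The only subtle point worth stating carefully is that the geometric domination of $T_i$ must be justified conditionally on the history (the particular collected subset of size $i-1$) and then shown to hold uniformly, so that taking expectations is legitimate; this is handled by the fact that the lower bound $(k-i+1)p$ on the per-step success probability does not depend on which specific $i-1$ elements have already appeared.
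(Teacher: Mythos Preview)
Your proof is correct and follows essentially the same approach as the paper: decompose the waiting time into $T_1+\cdots+T_k$, lower-bound the per-step success probability after $i-1$ elements have been collected by $(|Y|-i+1)p$, bound each $\mathbb{E}[T_i]$ by the reciprocal, and sum to obtain $H_{|Y|}/p$. Your write-up is in fact slightly more careful than the paper's in explicitly noting the uniform conditional domination, but the argument is otherwise identical.
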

\begin{proof}
    If $(i-1)$ elements of $Y$ have already been sampled then probability that a new element of $Y$ will be sampled is $p_i \geq (|Y|-(i-1))p$. $T[i]$ is the expected number of samples of $E$ before getting $i$ elements of $S$ after $i-1$ elements have already been sampled.
    \begin{equation}
        \mathbb{E}(T[i])=\frac{1}{p_i}\leq ((|Y|-(i-1))p)^{-1}
    \end{equation}
    It follows that to the expected number of samples before getting the whole set $Y$ is:
    \begin{equation}
        \mathbb{E}(T)=\sum_{i=1}^{|Y|}T[i]\leq \sum_{i=1}^{|Y|} \frac{1}{(|Y|-(i-1))p)}=\frac{1}{p}\sum_{k=1}^{|Y|}\frac{1}{k}=\Theta(\frac{\log{|Y|}}{p})
    \end{equation}
\end{proof}
\begin{algorithm}[h!]
\caption{Hybrid algorithm for finding the shortest $s\mhyphen t$ path in the graph $G$}\label{alg:divideandconquer}

\begin{algorithmic}[2]

\Statex\textbf{Input:} Unweighted graph $G$ with a path $P$ between vertices $s$ and $t$ satisfying \textbf{condition 1}, upper bound $l$ on the length of $P$.  
\Statex \textbf{Output:} the shortest $s\mhyphen t$ path $P$.
    \vspace{0.2cm}
    \begin{enumerate}
        
       \item Prepare $O(l\log{l})$ copies of the flow state $\ket{f^{s\rightarrow t}}$ to accuracy $\varepsilon=\frac{1}{\sqrt{8l}}$.
       \item Measure each of the states from 1. in the edges basis. Store the set of at most $\lceil 2l\log{l}\rceil$ distinct edges. 
        \item Run a classical path finding algorithm on the stored set of edges and output the result. 
    \end{enumerate}
   \end{algorithmic}
\end{algorithm}

    The effective resistance $R_G(s,t)$ is never larger than the (topological) length of the shortest $s\mhyphen t$ path. To convince oneself of that fact one can consider only the shortest $s$-$t$ path in G. The resistance of that path is equal exactly to its topological length (or weight in a weighted graph) and from \textit{Rayleigh monotonicity principle} one knows that addition of a new edge can decrease the resistance or leave it unchanged, but can never increase it. It follows that $R_G(s,t)$ always satisfies: \begin{equation}
    \label{eq:R_G<l}
        R_G(s,t)\leq l
    \end{equation}

\begin{proof}[Proof of Theorem~\ref{thm:simplealg}]
    
    For the considered graphs, from the above realisation (eq \eqref{eq:R_G<l}) and Lemma~\ref{lem:flowhalf} it follows directly that when measuring a perfect flow state, the probability of sampling an edge $e$ on the short path $P$ is 
    \[\frac{f_e^2}{R_G(s,t)}\geq \frac{1}{4l}.\]
    We require the accuracy of flow state preparation to be at least $\varepsilon=\frac{1}{\sqrt{8l}}$, so that the probability of sampling each edge on $P$ is still at least $\frac{1}{4l}-\frac{1}{8l} \geq \frac{1}{8l}$.
    Then from Lemma~\ref{lem:couponcollector} it follows that the expected number of samples to get the whole path $P$ is $\mathbb{E}(T)=O(l\log{l})$.  
    
    Let $S_\beta$ be the set of sampled edges after getting $\beta\mathbb{E}(T)$ samples for some positive constant $\beta$. Then via Markov's inequality we can see that after the expected number of steps the probability that shortest path is not in the set of sampled edges is bounded as $\mathbb{P}(P\not \subseteq S_{\beta})\geq \frac{1}{\beta}$.  Using a deterministic classical algorithm to find the short path in $S_{\beta}$, the overall algorithm therefore succeeds with constant probability. 
    
    By Theorem~\ref{thm:prepareflowstate}, each electric flow state preparation can be done in time $O(\sqrt{lm}(\frac{1}{\varepsilon}+\log{l d_s}) = O(l\sqrt{m})$. 
    This is repeated $O(l \log l)$ times, so the total algorithm takes $O(l^2\sqrt{m}\log l +\kappa)$ steps where $\kappa$ is the post-processing cost which depends on the method, but is negligible in comparison to $l^2\sqrt{m}\log l$ (for example it can be upper-bounded as $\kappa=O(l^2\log^2{l})$, in the case of Dijkstra).

\end{proof}

\section{Algorithm $\mathcal{A}_2$: divide and conquer}
\label{sec:algorithm2}

The algorithm we outline below relies on sampling the quantum flow state, performing effective-resistance-based validation checks on sampled edges and following a divide and conquer procedure. The algorithm comes in two flavours depending on the desired time-memory trade-off. Namely, whenever one has limited work-space access then the problem can be encoded on $\log{m}$ qubits and the algorithm will output the solution in $O(l\sqrt{m}\log^4{l})$ steps. However, if one is able to accommodate encoding on $O(l\log{m})$ qubits then the algorithm outputs the solution in $\tilde{O}(\sqrt{lm})$ steps. The latter query complexity is optimal in the assumed framework up to \textit{poly(log)} factors, as it represents the fundamental cost of a single quantum flow state preparation (Theorem~\ref{thm:prepareflowstate}), which is also equal to the cost of detecting the $s\mhyphen t$ path~\cite{belovs2013quantum}. The speedup over algorithm $\mathcal{A}_1$ comes from the fact that the cost of each successive preparation is smaller, as assured by the divide and conquer approach.
The algorithm's workflow consists of several steps. Firstly, having an upper bound on the length of the considered path one prepares $k=O(\log{l})$ quantum flow states $\ket{f^{s\rightarrow t}_G}$ to accuracy $\varepsilon=O(1/\sqrt{l})$ using the algorithm from Theorem~\ref{thm:prepareflowstate}, followed by computational basis measurements. As a next step of the algorithm for each of the sampled edges $e$ one creates a graph $G'= G\setminus e$ and runs the effective resistance estimation algorithm. Afterwards, one is in possession of $k$ values of which the maximum value must be found. Since the maximum finding and the comparison procedure can be done in logarithmic time (as the instance size is at most $O(\log{l})$) the complexity of that step will be omitted in further analysis. The edge $e'$ corresponding to the maximal value of effective resistance $R_{G\setminus e'}(s,t)$ is with high probability an edge on the shortest path, i.e $e'\subset P$ (as assured by corollary 1), and as we are about to show $e'$ is also most likely to be in a certain restricted distance from the middle of the path $P$. In step 3. one picks a random vertex from $e^{\prime}$ and repeats from the first step, without the effective resistance estimation $R_G(s,t)$ which does not need to be estimated again. After the expected number of iterations the algorithm terminates and with high probability outputs a positive witness to the shortest $s\mhyphen t$ path problem.

\begin{algorithm}
\caption{Quantum algorithm for finding the shortest $s\mhyphen t$ path in the graph $G$}\label{alg:divideandconquer}

\begin{algorithmic}[1]
\Statex\textbf{Input:} Graph $G$ with path $P$ connecting vertices $s$ and $t$ and satisfying condition~\ref{cond:1}, upper bound $l$ on the shortest path length. 
\Statex \textbf{Output:} the shortest $s\mhyphen t$ path $P$.
    \vspace{0.3cm}
    \begin{enumerate}
    \item Let $x=s$, $y=t$ and $\hat{R}_G(x,y)=l$ be an upper bound on $R_G(x,y)$.
        \item Prepare the flow state $\ket{f^{x\rightarrow y}}$ to accuracy $\varepsilon=\frac{1}{4k}$, measure in computational basis and store the resulting edge in a classical register. Repeat $k=O(\log{l})$ times. 
        \item For each of the edges sampled in step 2. remove the measured edge $e$ from $G$ and estimate the $x\mhyphen y$ effective resistance $R_{G\setminus e}(x,y)$ to multiplicative accuracy $\alpha$ and probability of failure $\delta_1=O(1/kl)$. Store the edge $e^{\star}$ corresponding to $\max_e(R_{G\setminus e}(x,y))$.
        \item If $e^{\star} \neq xy$, let $v$ be a random endpoint of $e^{\star}$ and partition the problem into two sub-problems, namely: paths from $s$ to $v$ and from $v$ to $t$. Calculate constant multiplicative accuracy approximations to $R_G(x,v)$ and $R_G(v,x)$ with probability of failure $\delta_2=O(1/l)$. Repeat from step $2$ on the created subproblems.

    \end{enumerate}
   \end{algorithmic}
\end{algorithm}

 \subsection{Algorithm $\mathcal{A}_2$: complexity and errors}
Next we demonstrate some conditions, which cause the divide and conquer algorithm (Algorithm~\ref{alg:divideandconquer}) to output the shortest $s\mhyphen t$ path, in the desired running time.

\begin{lemma}
    \label{lem:divideandconquer}
    Let $P$ be the shortest path between vertices $s$ and $t$. 
    Suppose there exist constants $\alpha, \beta, \gamma \in \Omega(1)$ such that for any $x,y \in P$, 
    \begin{enumerate}
        \item if an edge $e$ is sampled from the electric flow between $x$ and $y$,
    \[\Pr_{e\sim \text{elfs}(x,y)}[ R_{G\setminus e}(x,y) > (1+\alpha)R_P(x,y)] \ge \beta\]
    \item $R_G(x,y) / R_P(x,y) \ge \gamma $.
    \end{enumerate}
    Then Algorithm~\ref{alg:divideandconquer} finds the path $P$ in time $\tilde{O}(l\sqrt{m})$ using $O(\log{n})$ space. Furthermore the algorithm can be parallelised to succeed in time $\tilde{O}(\sqrt{lm})$ using $O(l\log{n})$ space.
\end{lemma}    
\begin{proof}

First we prove correctness of the algorithm, then we will prove the claimed run time.

\textbf{Correctness} When the algorithm succeeds, each iteration of steps 2.-4. outputs a new vertex $v$ on the shortest path $P$ between $x$ and $y$ and so there are at most $l$ iterations in total.
It therefore suffices to prove that each iteration of steps 2. and 3. returns an edge $e^{\star} \in P$ between $x$ and $y$ with probability at least $1-O(1/l)$; so that the overall algorithm returns the whole path $P$ with at least constant probability.

Let's analyse what happens in step 3. The edge $e^{\star}$ is chosen as the edge that maximises $R_{G\setminus e}(x,y)$ among the sampled edges.
For an edge $e\notin P$, by the Rayleigh monotonicity principle, $R_{G\setminus e}(x,y) \leq R_P(x,y)$. Therefore if an edge $e$ is sampled that satisfies $R_{G\setminus e}(x,y) > (1+\alpha)R_P(x,y)$, then it must be the case that $e$ is on the short path $P$.
Since the effective resistance calculations in step 3 are done to multiplicative accuracy $\alpha$, they will correctly distinguish between these two cases. So if there is at least one edge that satisfies $R_{G\setminus e}(x,y) > (1+\alpha)R_P(x,y)$ and all $k$ of the effective resistance calculations succeed, then $e^{\star}$ will be in $P$.
Each calculation fails with probability at most
$\delta_1 =\frac{1}{kl}$, so the probability that all $k=O(\log l)$ succeed is at least $1-1/l$.

    By the assumption 1 of the lemma,  the probability that an edge $e$, perfectly sampled from the electric flow state, satisfies $R_{G\setminus e}(x,y) > (1+\alpha)R_P(x,y)$ is at least $\beta$. 
    Since the electric flow state is prepared to accuracy $\epsilon= O(1/{\log{l}})$, each edge sampled in step 2 has this property with probability at least $\beta-O(1/\log{l}) \geq \beta/2$.
    After taking $k=O(\log{l})$ samples, the probability that no edges have this property is at most $ (1-\beta/2)^{O(\log{l})}\leq O(\frac{1}{l})$, and thus the probability that $e^{\star}$ is in $P$ is at least $1-O(\frac{1}{l})$.

\textbf{Running time} 
Next we use the second assumption of the lemma to show that $e^{\star}$ is likely to be not just from $P$, but from the ``middle section'' $\zeta$ of $P$ excluding the edges at distance $\leq \gamma R_P(x,y)/ 8 k$ away from $x$ and $y$. 
From eq \eqref{eq:electricflowstate}, the probability that an edge sampled from the electric flow state is in the ``bad region'' $P\setminus \zeta$ is 
\begin{equation}
\mathbb{P}[e\in P \setminus \zeta]
=\frac{\sum_{e\in P\setminus\zeta}f^2_er_e}{R_G(s,t)} 
\leq \frac{\sum_{e\in P \setminus \zeta} r_e}{R_G(x,y)}
=\frac{1}{R_G(x,y)}\cdot \frac{2\gamma R_P(x,y)}{8 \log{l}}
\end{equation}
where the inequality holds since the maximal flow through an edge is at most 1 in the unit electric flow, and the final equality comes from the definition of $\zeta$.
By assumption 2 of the lemma, $R_G / R_P \ge \gamma $ and so we have 
\[\Pr(e \in P\setminus \zeta) \leq \frac{1}{ 4 k}\]
Each edge sampled in step 2 of the algorithm is sampled not from the perfect electric flow state, but an $\epsilon=\frac{1}{4 k}$ approximation and so each of these edges is in $P\setminus \zeta$ with probability $1/2 k)$.

By a union bound, the probability that none of the $k$ edges sampled in step 2 are in $P\setminus \zeta$ is at most $1/2$, and so $\Pr(e^{\star} \in P\setminus \zeta) \leq 1/2$.

Therefore 
\[\Pr(e^{\star} \in \zeta) =\Pr(e^{\star} \in P) -\Pr(e^{\star} \in P\setminus \zeta) \geq 1/4\]
To analyse the query complexity, we first observe that step 2. includes $k$ state preparations each of which takes $O(\sqrt{\hat{R}_G(x,y)m}\log l)$ queries. Step 3 consists of $k$ effective resistance estimates which each take $O(\sqrt{\hat{R}_G(x,y)m}\log(1/\delta)\alpha^{-3/2})=O(\sqrt{\hat{R}_G(x,y)m}\log l)$ steps for constant $\alpha$. Step 4 involves two further effective resistance estimates and takes $O(\sqrt{\hat{R}_G(x,y)m}\log l)$ steps. Each iteration of steps 2-4 therefore takes a total of $O(\sqrt{\hat{R}_G(x,y)m}\log^2 l)$ steps.

If $e^{\star} \in \zeta$, then the path is subdivided into two sections of length at most $wR_P(x,y)$, where $w=(1-\frac{\gamma}{8k})$. As shown above, this happens with constant probability and so the expected number of iterations of steps 2-4. until all sections are of length $wR_P(x,y)$ is at most constant. The total cost of splitting into subproblems of length $wR_P(x,y)$ is therefore $O(\sqrt{\hat{R}_G(x,y)m}\log^2 l)$ steps. 

We can now calculate the cost of the whole algorithm.
While in practice the algorithm is run until all problems are subdivided until they reach a single edge, for the calculation of the total cost, we split into ``layers'', where in each layer the problems are subdivided until they are reduced in length by a multiplicative factor of $w$. 
The total number of layers is therefore $\log_{\frac{1}{w}}l = O(\log^2 l)$.

If the lengths of the subproblems at the start of a layer are $l_i$ for $i=1,2,\dots$ then the total cost of that layer is 
\[\sum_i O(\sqrt{l_i m}\log^2 l) \leq \sum_i O(l_i\sqrt{ m}\log^2 l)=O(l\sqrt{ m}\log^2 l)\]
Multiplying by the total number of layers, $O(\log^2 l)$, gives the total claimed run time of $O(l\sqrt{ m}\log^4 l)$.

\end{proof}

In particular, note that the cost of the last iteration dominates the total complexity of algorithm $\mathcal{A}_2$.\\
If one had a larger work space than $O(\log{n})$ then simply by parallelizing all procedures yields an algorithm that requires only $O(\sqrt{lm})$ steps but utilises $O(l\log{n})$ space. The space requirements are exponentially larger whenever $l>O(n^c)$ for any $c>0$. But for instances where $l=O(\log{n})$ the algorithm outputs the shortest path in $\tilde{O}(\sqrt{lm})$ steps and uses only $O(\log^2{n})$ space. This is an important observation, as it demonstrates that there exist graph instances, where one can find a path in the number of steps asymptotically equal (again, up to \textit{polylog} factors) to the cost of the quantum walk algorithm of Belovs~\cite{belovs2013quantum} for detecting a path, moreover one can do that while preserving the (efficient) \textit{polylog} work space.

\subsection{Algorithm 2 is efficient under condition 1$'$}
For Algorithm~\ref{alg:divideandconquer}, we require a slight strengthening of Condition~\ref{cond:1}, which we call Condition 1$'$.

\begin{taggedcondition}{1$'$}
    
    \label{cond:1'}
    For any subgraph $X \subset G$ such that $P \not\subseteq X$, we have $R_X(s,t) > (1+\alpha)R_P(s,t)$ for some constant $\alpha>0$.
\end{taggedcondition}

Note that for any fixed graph $G$ and path $P$ satisfying Condition~\ref{cond:1}, there exists some $\alpha>0$ such that Condition~\ref{cond:1'} is satisfied. However, as $G$ grows, it is possible that $\alpha \rightarrow 0$, and this will affect the performance of algorithm $\mathcal{A}_2$. For convenience, we therefore restrict to families of graphs where Condition~\ref{cond:1'} is satisfied for a constant $\alpha>0$, independent of $G$ and $P$.

We observe that if Condition~\ref{cond:1'} holds on the path $P$ between $s$ and $t$, then it also holds between any subpath of $P$ between vertices $x$ and $y$ on $P$. Indeed, let $e$ be an edge on $P$ between $x$ and $y$. Note that 
\begin{align} 
R_P(s,t) &=R_P(s,x)+R_P(x,y)+ R_P(y,t) \\
R_{G\setminus e}(s,t) &\leq R_P(s,x)+R_{G\setminus e}(x,y)+ R_P(y,t)
\end{align}
and so $R_{G\setminus e}(x,y) - R_P(x,y)\geq R_{G\setminus e}(s,t) - R_P(s,t) \geq \alpha R_P(s,t) \geq \alpha R_P(x,y)$.

To prove that Algorithm~\ref{alg:divideandconquer} is succeeds and is efficient under condition~\ref{cond:1'} , we need to show that the requirements of Lemma~\ref{lem:divideandconquer} are met.
First note that  Lemma~\ref{lem:RG>RP/2} implies that $R_G>R_P/2$ i.e.  requirement 2 of Lemma~\ref{lem:divideandconquer} is met  is met with $\gamma=1/2$.
By Condition~\ref{cond:1'}, every edge $e \in P$ satisfies $R_{G\setminus e} \> (1+\alpha)R_P$, and so 
\[\Pr_{e\sim \elfs{s}{t}}[R_{G\setminus e} \> (1+\alpha)R_P] \geq \Pr_{e\sim \elfs{s}{t}}[e \in P] \geq \frac{R_G(s,t)}{R_P(s,t)} \geq \frac{1}{2}  \]
where the second inequality is Lemma~\ref{lem:probinsubgraph} and the final inequality is Lemma~\ref{lem:RG>RP/2}.

\subsection{Algorithm 2 is efficient under condition 2}

In this section we prove that if a classical random walk is likely to find the short path, then the divide and conquer algorithm (Algorithm~\ref{alg:divideandconquer}) will work efficiently. More precisely, we show that the requirements of Lemma~\ref{lem:divideandconquer} are met when Condition~\ref{cond:2} holds.

First we observe that requirement 2 of Lemma~\ref{lem:divideandconquer} holds due to Lemma~\ref{lem:probonpath}. It therefore remains to show that requirement 1 of Lemma~\ref{lem:divideandconquer} also holds: we do this in Lemma~\ref{lem:cond2alg2}. 
But first we need one more technical result to understand how much the effective resistance changes when an edge $e$ is removed.
\begin{lemma}
    \label{lem:RGnoe}
    Let $f$ be the unit electric flow from $s$ to $t$ in $G$, with effective resistance $R_G$. 
    Then the effective resistance between $s$ and $t$ in $G\setminus e$ is 
    \[R_{G\setminus e} = R_G +\frac{f_e^2r_e}{1-R_e/r_e}\]
    where $R_e$ is the effective resistance of the unit electric flow between the endpoints of $e$.
\end{lemma}

\begin{proof}
    Let $f^{G(a\rightarrow b)}$ be the unit electric flow from $a$ to $b$, where $a$ and $b$ are the endpoints of the edge $e$. So $R_e$ is the effective resistance/energy of this flow. The amount of flow down the edge $e$ is $f^{G(a \rightarrow b)}_e=R_e/r_e$, and therefore $\frac{1}{1-R_e/r_e}f^{G( a \rightarrow b)}$ sends a unit flow in $G\setminus e$ from $a$ to $b$.

    The unit electric flow $f^{G\setminus e(s\rightarrow t)}$ from $s$ to $t$ in $G\setminus e$ is the linear combination:
    \[f^{G\setminus e(s\rightarrow t)} = f + \frac{f_e}{1-R_e/r_e}f^{G(a \rightarrow b)}\]

    Therefore the resistance $R_{G\setminus e}$ is 
    \[R_{G\setminus e}= (v^{G(s\rightarrow t)}_s-v^{G(s\rightarrow t)}_t)+ \frac{f_e}{1-R_e/r_e}(v^{G(a\rightarrow b)}_s-v^{G(a\rightarrow b)}_t)\]

    It remains to note that $(v^{G(s\rightarrow t)}_s-v^{G(s\rightarrow t)}_t)=R_G$ and 
    \[v^{G(a\rightarrow b)}_s-v^{G(a\rightarrow b)}_t=(\bra{a}-\bra{b})L^{-1}(\ket{s}-\ket{t})=v^{G(s\rightarrow t)}_a-v^{G(s\rightarrow t)}_b=f_er_e\]

\end{proof}

 We can now prove Lemma~\ref{lem:cond2alg2}:

\begin{lemma}
    \label{lem:cond2alg2}
    Suppose Condition~\ref{cond:2} holds for the short path $P$ between $s$ and $t$.
    Let $e$ be sampled from the electric flow state between $s$ and $t$. Then there exists $\alpha, \beta =\Omega(1)$ such that 
    \[\Pr_{e\sim \text{elfs}(s,t)}[ R_{G\setminus e}(s,t) > (1+\alpha)R_P(s,t)] \ge \beta\]
\end{lemma}

\begin{proof}
    In this proof, we drop the $(s,t)$ labels for conciseness.
    Let $X(e)=\frac{R_G}{R_{G\setminus e}-R_G}$, and $x^*=\frac{R_G}{(1+\alpha)R_P-R_G}$, so that $R_{G\setminus e} > (1+\alpha)R_P$ is equivalent to $X(e) < x^*$. 
    Then the probability we need to lower bound is $\Pr[X(e) < x^*]$ where the probability distribution is over the edge $e$ sampled from the electric flow.
    \[\Pr[X < x^*]\ge\Pr[X < x^* \text{ and } e \in P]=\Pr[X < x^*|e \in P]\Pr[e\in P]\]
    In fact, the above inequality is an equality because if $e\notin P$, then the path $P$ still exists when $e$ is removed from $G$ and so in this case $R_{G\setminus e} \le R_P$.
    Since $\Pr[e\in P]\ge 0.537$ by Lemma~\ref{lem:probonpath} and the assumption that Condition~\ref{cond:2} holds, it suffices to prove that $\Pr[X< x^*|e \in P]=\Omega(1)$ for some choice of $\alpha$.
    We do this with Markov's inequality:
    \[\Pr[X< x^*|e \in P]=1-\Pr[X \ge x^*|e \in P] \ge 1- \frac{\E[X |e\in P]}{x^*} \]
    Recall that the probability of selecting edge $e$ is $p(e)=\frac{f_e^2r_e}{R_G}$, and by Lemma~\ref{lem:RGnoe} $X(e)=(1-R_e/r_e)/p(e)$, so 
    \[\E[X |e\in P]=\frac{1}{\Pr[e\in P]}\sum_{e \in P }p(e)X(e) =\frac{1}{\Pr[e\in P]}\sum_{e \in P }(1-R_e/r_e)\]
    Now we bound $\sum_{e \in P }(1-R_e/r_e)$ using the inequality $1-y \le \ln(1/y)$ 
    \[\sum_{e \in P }(1-R_e/r_e)\le \sum_{e \in P }\ln(r_e/R_e)=\ln \left(\prod_{e \in P} r_e/R_e\right)\]

    Recall that for $T$ a random spanning tree, $\Pr_{T}[e\in T]=R_e/r_e$ (equation \eqref{eqn:PeinT}), and so by the negative association property (equation \eqref{eqn:negativeassociation}) $\prod_{e \in P} R_e/r_e \ge \Pr_T[P\subseteq T]$.
    Therefore

    \[\E[X |e\in P]=\frac{1}{\Pr[e\in P]}\sum_{e \in P }(1-R_e/r_e)\le \frac{\ln(1/\Pr_T[P\subseteq T])}{ \Pr[e\in P]}\]
    Let $q=\Pr_T[P\subseteq T]$ and use Lemma~\ref{lem:probonpath}: $\Pr[e\in P]\ge R_G/R_P \ge q$, to show that $1/x^*=(1+\alpha)R_P/R_G-1\le(1+\alpha)\frac{1}{q}-1$ and so 
    \[\frac{\E[X |e\in P]}{x^*}\le ((1+\alpha)\tfrac{1}{q}-1)\frac{\ln(1/q)}{q}
    \]
    This is less than 1 when
    \[(1+\alpha)\tfrac{1}{q}-1\le \frac{q}{\ln(1/q)}\]
    \[\Leftrightarrow\alpha\le \frac{q^2}{\ln(1/q)} +q-1\]
    It is easy to check that $\frac{q^2}{\ln(1/q)} +q-1$ is an increasing function of $q$. Numerically we can compute that the only zero of this function is at $\approx 0.5368\dots$.
    By the assumption that Condition~\ref{cond:2} is satisfied, we have $q\ge 0.537$ and so we can choose $\alpha$ such that 
    \[0< \alpha < \frac{q^2}{\ln(1/q)} +q-1\].
\end{proof}

\section{Remarks \& Open Problems}
\label{sec:conclusion}
We have introduced quantum algorithms that achieve asymptotically improved runtime for the shortest \( s\text{-}t \) path problem on structured graph instances, surpassing all previously known classical and quantum methods. Notably, one of our algorithms finds the shortest path with asymptotic runtime matching that of path detection algorithms, up to \textit{polylogarithmic} factors. Furthermore, the electric network framework at the core of our approach has been shown to be effective for the minimum cut problem as well~\cite{path-edge}, indicating its broader applicability. We anticipate that this framework can be further utilised to obtain additional quantum complexity advantages for related graph problems.\\

The following open questions remain:
\begin{enumerate}

    \item Understanding what classes of graphs satisfy \textbf{condition \ref{cond:1}} and \textbf{condition \ref{cond:2}}. In particular, do there exist graphs where $m=O(n^{1+\varepsilon})$ where \textbf{condition \ref{cond:1}} is satisfied between every pair of vertices?

    \item  Moreover, we would like to understand whether there exist path finding algorithms that achieve the asymptotic runtime of path detection algorithms for unrestricted (or less restricted) instances of the problem.

\end{enumerate}

\section{Acknowledgements}
We would like to thank Simon Apers for the insightful discussion during his visit to Royal Holloway UoL.

\printbibliography

\appendix
\section{Consequences of Condition 1}
\label{sec:appendix}
In this appendix we provide the missing proofs for Lemma~\ref{lem:flowhalf} and Lemma~\ref{lem:RG>RP/2}.
 \begin{proof}[Proof of Lemma~\ref{lem:flowhalf}]
Consider the unit electric $s\mhyphen t$ flow $f$ and fix an edge $e \in P$. 
We can decompose $f$ as a linear combination of two unit flows $f^P$ and $f^C$ as 
\[f=f_ef^P +(1-f_e) f^C\] where $f^P$ is the unit $s\mhyphen t$ flow along $P$ and $f^C$ is defined as
\[f^C=\frac{f- f_ef^{P}}{(1-f_e)}.\]

Note that $f^C$ has no flow along edge $e$ ($f^C_e=0$) and hence can be thought of as an $s\mhyphen t$ flow in the graph $G\setminus e$.


Now we consider the $s\mhyphen t$ flow $f(\alpha)$ which is parameterised by $\alpha \in [-1,1]$:

\begin{equation}
    f(\alpha)=\alpha f^{P}+(1-\alpha)f^{C}.
    \label{eq:falpha}
\end{equation}
At $\alpha=f_e$, this is the electric flow. Since the electric flow minimises the energy among \emph{all} unit $s\mhyphen t$ flows, we can compute $f_e$ by finding the value of $\alpha$ that minimises the energy $\mathcal{E}(f(\alpha)) = f(\alpha)\cdot f(\alpha)$ eq \eqref{eq:energydotproduct}.

Substituting in eq \eqref{eq:falpha} gives
\begin{equation}
    \mathcal{E}(f(\alpha)))= f(\alpha)\cdot f(\alpha)
    =\alpha^2(f^{P}\cdot f^{P})+2\alpha(1-\alpha) f^{P}\cdot f^{C}+(1-\alpha)^2 f^{C}\cdot f^{C})
\end{equation}
Since this is quadratic in $\alpha$, the minimum can easily be found by completing the square or solving $\frac{d}{d\alpha}\mathcal{E}(f(\alpha))=0$.
This leads to 
\begin{equation}
    f_e = \arg \min_{\alpha} \mathcal{E}(f(\alpha)) = \frac{f^C \cdot f^C - f^P\cdot f^C}{f^P\cdot f^P + f^C \cdot f^C -2 f^C \cdot f^P }
    \label{eq:feargmin}
\end{equation}

The flow $f^P$ is the electric $s\mhyphen t$ flow in $P$, so $f^P \cdot f^P=\mathcal{E}(f^P)=R_P$; and $f^C$ is a flow on $G\setminus e$, so $f^C\cdot f^C=\mathcal{E}(f^C)\ge R_{G\setminus e}$.
Combining this with Condition \ref{cond:1} gives 
\begin{equation}
    f^{P}\cdot f^{P}=R_P\le R_{G\setminus e} \le f^{C}\cdot f^{C}
    \label{eq:fPfCinequality}
\end{equation}

Therefore the denominator of \eqref{eq:feargmin} satisfies $ f^P\cdot f^P + f^C \cdot f^C -2 f^C \cdot f^P \le 2(f^C\cdot f^C -f^C \cdot f^P)$. Noting that the denominator is also $(f^P-f^C)\cdot (f^P-f^C) \ge 0$, we therefore have the claimed result $f_e \ge 1/2$.

\end{proof}

Finally we prove Lemma~\ref{lem:RG>RP/2}, which follows as a simple consequence of the following:

\begin{lemma}
    \label{lem:1/R_G}
    Let $P$ be a path in $G$ such that Condition~\ref{cond:1} holds. Then there exists a probability distribution $\{p(e)\}_{e \in P}$ over the edges of $P$ such that:
    \begin{equation}
        \frac{1}{R_G} \leq \frac{1}{\mathbb{E}_{e \sim p(e)}(R_{G\setminus e})} +\frac{1}{R_P} 
        \label{eq:1/R_G}
    \end{equation}
\end{lemma}

Before we prove Lemma~\ref{lem:1/R_G}, let's first observe how it implies Lemma~\ref{lem:RG>RP/2}. By Condition~\ref{cond:1}, $R_{G\setminus e} >R_P$ for all edges $e \in P$, and so $\mathbb{E}_{e \sim p(e)}(R_{G\setminus e}) >R_P$. Therefore the right hand side of \eqref{eq:1/R_G} is at most $2/R_P$, and the inequality of Lemma~\ref{lem:RG>RP/2} follows.
\begin{proof}[Proof of Lemma~\ref{lem:1/R_G}]
    Let $v^{G \setminus e_i}_a$ be the voltage at $a$ of the unit electric flow from $s$ to $t$ in $G \setminus e_i$. Taking the same voltages on the whole of $G$ induces an electric flow which is a unit flow from $s$ to $t$ and an additional $q_i$ flow from $x_{i-1}$ to $x_i$ along the edge $e_i$. 

    For now, assume that $q_i >0$, so that we can take the linear combination of voltages $\sum_{i=1}^l q_i^{-1} v^{G\setminus e_i}$.
    These voltages therefore induce an electric flow from $s$ to $t$ of total flow $1+\sum_{i=1}^l q_i^{-1}$ with potential difference between $s$ and $t$ of $\sum_{i=1}^l q_i^{-1} R_{G\setminus e_i}$. 
    Rescaling, we have that the potential difference of the unit electric flow (and hence the effective resistance) between $s$ and $t$ is 
    \[R_G=\frac{\sum_{i=1}^l q_i^{-1} R_{G\setminus e_i}}{1+\sum_{i=1}^l q_i^{-1}}.\]

    Let $p(e_i)= q_i^{-1}/(\sum_j q_j^{-1})$ and rearrange to get:

    \[\frac{1}{R_G} =\frac{1+\sum_{i=1}^l q_i^{-1}}{\sum_{i=1}^l q_i^{-1} R_{G\setminus e_i}}
    =\frac{1}{\sum_{i=1}^l q_i^{-1} R_{G\setminus e_i}}+ \frac{1}{\sum_{i=1}^l p(e_i) R_{G\setminus e_i}}\]
    
    The second term can be rewritten as $\sum_{i=1}^l p(e_i) R_{G\setminus e_i} = \mathbb{E}(R_{G \setminus e})$. For the first term, we note that by Ohm's law, the flow $q_i$ along edge $e_i$ is 
    \[q_i = \frac{v^{G\setminus e_i}_{x_{i-1}}-v^{G\setminus e_i}_{x_{i}}}{r_{e_i}}.\]
    Since $0 \leq v_x^{G\setminus e_i}\leq R_{G\setminus e_i}$, it follows that $q_i \leq R_{G\setminus e_i}/r_{e_i} $.
    Therefore $\sum_{i=1}^l q_i^{-1} R_{G\setminus e_i}\geq \sum_{i=1}^l r_{e_i} = R_P$.

    It remains to show that $q_i > 0$ (in fact we show that $q_i \geq 1$). Since the flow through edge $e_j$ is at most $1$, Ohm's law implies that 
    \[v^{G\setminus e_i}_{x_{j-1}}-v^{G\setminus e_i}_{x_{j}} \le r_{e_j}\]
    Taking the sum over $j\in \{1, \dots i-1\}$ and $j\in \{i+1, \dots l\}$, most terms cancel and (recalling that $v^{G\setminus e_i}_{x_0}=R_{G\setminus e_i}$ and $v^{G\setminus e_i}_{x_l}=0$):
    \[R_{G\setminus e_i} - v^{G\setminus e_i}_{x_{i-1}}  \leq \sum_{j=1}^{i-1} r_{e_j} \quad \text{ and } \quad v^{G\setminus e_i}_{x_{i}} - 0  \leq \sum_{j=i+1}^{l} r_{e_j}.\]
    Combining these two and rearranging we have:
    \[v^{G\setminus e_i}_{x_{i-1}} - v^{G\setminus e_i}_{x_{i}} \geq R_{G\setminus e_i} -  \left(\sum_{j=1}^{i-1} r_{e_j}  +\sum_{j=i+1}^{l} r_{e_j}\right) = R_{G\setminus {e_i}}-R_P +r_{e_i} \geq r_{e_i}\]
    where the final inequality follows from Condition~\ref{cond:1}.
    
\end{proof}

\end{document}